\newtheorem{proposition}{Proposition}
\begin{document}
\bibliographystyle{apalike}

\title{Stratified communities in complex business networks}
\author{Roy Cerqueti$^{\flat}$\thanks{Corresponding author.}, Gian Paolo Clemente$^{\natural}$,  Rosanna Grassi$^{\sharp}$\\
 $^{\flat}$ {\small University of Macerata, Department of Economics and Law. }\\{\small Via Crescimbeni 20, 62100, Macerata,
Italy.}\\ {\small Tel.: +39 0733 2583246; fax: +39 0733 2583205. Email: roy.cerqueti@unimc.it} \\
$^{\natural}$  {\small Catholic University of Milan, Department of Mathematics, Finance and Econometrics} \\ {\small Email: gianpaolo.clemente@unicatt.it} \\
$^{\sharp}$  {\small University of Milano-Bicocca, Department of Statistics and Quantitative Methods} \\ {\small Email: rosanna.grassi@unimib.it}
}
\maketitle

\begin{abstract}

This paper presents a new definition of the community structure of a
network, which takes also into account how communities are
stratified. In particular, we extend the standard concept of
clustering coefficient and provide the local $l$-adjacency
clustering coefficient of a node $i$. We define it as an opportunely
weighted mean of the clustering coefficients of nodes which are at
distance $l$ from $i$. The stratus of the community associated to
node $i$ is identified by the distance $l$ from $i$, so that the
standard clustering coefficient is a peculiar local $l$-adjacency
clustering coefficient at stratus $l=0$. As the distance $l$ varies,
the local $l$-adjacency clustering coefficient is then used to infer
insights on the community structure of the entire network. Empirical
experiments on special business networks are carried out. In
particular, the analysis of air traffic networks validate the
theoretical proposal and provide supporting arguments on its
usefulness.

\end{abstract}

\textbf{Keywords:} Community structure, complex business networks,
geodesic distance in networks, communities stratification.

\section{Introduction}

Business researchers and economists are paying a growing attention
to the relevant theme of communities detection and to the related
complex networks models, as recent literature suggests. In the area
of financial economics, \cite{ausloos2007clusters} deals with the
clusters of networks of economies by using data on the Gross
Domestic Product of a group of countries. Under an international
economics perspective, \cite{Barigozzi2011,Piccardi2012} discuss the
assessment of the communities in the context of the world trade
network. The theme of the brand communities in the framework of
entrepreneurship and social science is also largely debated (see
e.g. \cite{Essamri2019, Lin2017, Zaglia2013}). It is worth
mentioning the role of communities in corporate organizations and
firms cooperation (see e.g. \cite{Vitali2014, Wilkinson2002}). In
general, a managerial architecture is grounded on the relationships
among agents, and network structures are particularly effective in
offering a representation of the complexity of the relationship
among the involved actors. Firms and managers are usually
interconnected through an intricate weave. Hence, discovering how
they are clustered at every level of the structure can give useful
insights in entrepreneurial strategic choices (\cite{Haakansson2002,
Wilkinson2002, Zaglia2013}).

The community structure around a node is generally represented and
measured through its clustering coefficient. Such a coefficient can
be defined in an operative way, being computed as the relative
number of actual triangles to which the vertex belongs over the
hypothetical ones. It has been developed in all the cases of
weighted, unweighted, directed and undirected networks. In this
respect, we mention the classical contributions of
\cite{Barrat_2004, Fagiolo_2007, Onnela_2005, Watts_1998} and the
non recent but highly informative monograph \cite{WasFaust}.
Recently, \cite{CleGra} contains a relevant extension of the
clustering coefficient proposed by \cite{Fagiolo_2007};
\cite{Rotundo_2010} discusses the clustering coefficient in presence
of already established communities for directed networks; \cite{Roy}
presents a concept of clustering coefficient which also includes the
presence of missing indirect links in the construction of the
triangles. The association between communities and clustering
coefficients is quite natural. Triangles are the easiest geometric
visualizations of the communities, providing a picture of
non-exclusive interactions among different agents. Clustering
coefficient could be used then as an indicator of the node position
with respect to communities.
However, even if it describes the linkages around a vertex, this
coefficient is a local measure hence it does not capture the
topology of the whole network. Moreover, the global clustering
coefficient of the network -- which is simply obtained by taking the
average of the local versions over all the nodes of the network --
often is not very informative. Indeed, a network can be highly
clustered at a local level but not at a global level, and this
suggests that the average of local clustering could not well
represent the global characteristics of the network. This drawback is peculiarly relevant, since
it is often crucial to assess the position of a node with respect to
communities in which the node itself is not directly involved (see
\cite{Estrada2011}).

This paper extends the concept of community structure of a node $i$
by considering the role and the relevance of the communities which
are placed at a geodesic distance $l\geq 1$ from the considered node
$i$. In details, we define the local $l$-adjacency clustering
coefficient of a node $i$ as an opportunely weighted mean of the
clustering coefficients of the nodes at geodesic distance $l$ from
$i$. In this way, we present a model of stratified communities,
where the stratification is ruled by the varying geodesic distances
from the reference node. In particular, the stratified community at
stratus $l$ is defined through the clustering coefficients of the
nodes at geodesic distance $l$ from $i$. In this respect, the
geodesic distance from the node represents the stratus of the
community.

Assessing the stratified community structure of a network represents
a crucial point for understanding the contextualization of the nodes
within the overall system. For instance, assume that a node $i$ has a low clustering coefficient but the nodes at a
distance $l>1$ from it exhibit a strong community level. In this
case, the node $i$ is not embedded in a powerful community
but it is surrounded by a high level of mutual interconnections for
the nodes quite far from it. This situation can be interpreted
under the perspective of shocks propagation. Indeed, if a node, located at a
geodesic distance from $i$ greater than $l$, receives a shock, such a shock
could hardly reach $i$, since $i$ is surrounded by highly clustered
nodes at a large distance from it and by an empty space -- in terms
of connections -- around it. In other words, nodes at distance $l$ might be
viewed as a sort of barrier for $i$, absorbing the external
solicitations at larger distances. This stops the propagation to $i$,
because of the absence of a community structure close to the node.

To test the proposed measure we perform an empirical application to
the paradigmatic business network related to the U.S. domestic air
traffic. Findings confirm the effectiveness of these measures in
seizing the peculiar characteristics of different hubs in the
airport network. We observe that larger hubs are not only highly
clustered but also connected to strong communities. When higher
strati are analysed, the effects of indirect connections with remote
airports are emphasized. Referring to large and medium hubs, the
proposed approach allows to emphasize the strategic role of such
nodes in the airport network system. Finally, although the
considered empirical network is only weakly asymmetric and it is
typically analysed as an undirected network in the literature
(see, for instance, \cite{Barrat_2004, Colizza, CleGra, Jia, Jia2014, Opsahl}), we show that a separate evaluation of directed
paths at high levels can be useful to identify specific patterns in
terms of in- and out-communities.

The rest of the paper is organized as follows. In Section
\ref{Prel}, we present the mathematical preliminaries and the basic
notation used in the article. In Section \ref{L_CC}, the concepts of
local $l$-adjacency clustering coefficient -- being $l$ the stratus
of the related community -- is defined for both undirected and
directed case. In Section \ref{NA} the new conceptualization of
clustering coefficients is applied to the U.S. airport network, in
order to discuss its stratified community structure. Conclusions are
in Section \ref{Concl}.

\section{Preliminary definitions and notations}\label{Prel}

We briefly present the mathematical definitions used in the paper.
A graph $G=(V,E)$ is identified by a set $V$ of $n$ vertices and a set $E$ of $m$ unordered pairs of vertices
(called edges). Vertices $i$ and $j$ are said to be adjacent if $(i,j)\in E$. Graphs considered here will be always without loops. The degree $d_{i}$ of the vertex $i$ is the number of its adjacent nodes.
A path between two vertices $i$ and $j$ is a sequence of distinct vertices and edges between $i$ and $j$. In this case, $i$ and $j$ are connected. $G$ is connected if every pair of vertices is connected. The distance $ d\left( i,j\right) $ is the length of any shortest path (or geodesic) between $i$ and $j$. If $i$ and $j$ are not connected, then $ d\left( i,j\right)=\infty $.
The diameter $diam(G)$ of a connected graph is the length of any longest geodesic.

Given a connected graph $G$, we define the set: $$N_{i}(l)=\{j
\in V | d(i,j)=l\}$$ of nodes $j \in V$ which are at distance $l$
with respect to the node $i$, where $l=1,\dots,diam(G)$. The
cardinality $|N_{i}(l)|$ is denoted as $d_{i}(l)$.
Notice that $N_{i}(1)$ is the set of nodes adjacent to
$i$, so that $d_{i}(1)=d_{i}$.

A graph $G$ is weighted when a positive real number
$w_{ij}>0$ is associated with the edge $(i,j)$. $w_{ij}=0$ if nodes
$i$ and $j$ are not adjacent.
In particular, when $w_{ij}=1$ if $(i,j) \in E$, then the
graph is unweighted. Thus, the unweighted case can be viewed
as a particular weighted one. For this reason, we use in this paper
only the general concept of weighted graphs and we denote a weighted
graph with its weights simply as weighted network.

The strength of the vertex $i$ is defined as
$s_{i}=\sum_{j=1}^{n}w_{ij}$.

In general, both adjacency relationships between vertices of $G$ and
weights on the edges are described by a nonnegative, real $n$-square
matrix $\mathbf{W}$ (the \emph{weighted adjacency matrix}), with
entries $w_{ij}$.

A weight can be also associated to every geodesic of a connected graph $G$
in the following way: let $\mathcal{G}_{ij}$ be the set of the
geodesics connecting the vertices $i$ and $j$; the generic element
of $\mathcal{G}_{ij}$ is $g=g_{ij}$.

We observe that more than one geodesic connecting $i$ and $j$ could
exist, so that, in general, $|\mathcal{G}_{ij}|\geq 1$. Recalling that
all the geodesics in $\mathcal{G}_{ij}$ have the same length, a
unique integer $l=l(i,j)$ exists such that the length of all the
paths in $\mathcal{G}_{ij}$ is $l$. The role of $l$ is crucial for
the following arguments. Therefore, $l$ will be explicitly added in
the notation when needed so that, for instance, $\mathcal{G}_{ij}$
will be $\mathcal{G}_{ij}(l)$, $g$ will become $g(l)$ and so on.

The weight of $g_{ij}(l)$ is the sum of the weights of its edges and
we denote it with $w_{ij}(l,g)$. This allows us to define the $l$-th
order strength of the node $i$ in this setting as
$$s_{i}(l)=\sum_{j \in N_{i}(l)} w_{ij}(l),$$
where $w_{ij}(l)=\min\limits_{g \in
\mathcal{G}_{ij}(l)}\{w_{ij}(l,g)\}.$  Notice that when $j \in
N_{i}(1)$, then $\mathcal{G}_{ij}(1)=\{(i,j)\}$; hence
$w_{ij}(1)=w_{ij}$ and $s_{i}(1)$ is the strength $s_{i}$ of the
vertex $i$.


A directed graph $D=(V,E)$ is obtained from $G$ by adding to
its edges a direction and $G$ is the underlying graph of $D$. In
this case, the links between couples of nodes are called
directed edges or arcs.
In a weighted directed graph, a weight $w_{ij}>0$ is associated with
the directed edge $(i,j)$ and, in general, the matrix $\mathbf{W}$
is not symmetric. In fact, since bidirectional edges between a pair
of nodes can exist, both $w_{ij}$ and $w_{ji}$ can be positive with
$w_{ij} \neq w_{ji}$.

A directed path from $i$ to $j$ is a sequence of distinct
vertices and arcs from $i$ to $j$ such that every arc has the same
direction; in this case, we say that $j$ is reachable from
$i$ and we call this \emph{out-path} of the node $i$. The distance
$\overrightarrow{d}(i,j)$ from $i$ to $j$ is the length of
such shortest out-path (or \emph{out-geodesic}) if any, otherwise $\overrightarrow{d}(i,j)=\infty$.

Since directed paths from $j$ to $i$ can also exist, we define the
\emph{in-path} of the node $i$ as the directed path from $j$ to $i$
and we denote with $\overleftarrow{d}(i,j)$ the length of any
shortest such in-path (or \emph{in-geodesic}). If $i$ is not reachable from $j$, then $\overleftarrow{d}(i,j)=\infty$.

If $i$ and $j$ are mutually reachable, both in and
out-geodesics of $i$ exist, although the distances
$\overleftarrow{d}(i,j)$ and $\overrightarrow{d}(i,j)$ can be
different. $D$ is strongly connected if every two vertices
are mutually reachable.

$D$ is weakly connected if the underlying graph $G$ is
connected. That it means that a geodesic $g$
between $i$ and $j$ exists in the underlying graph $G$. In this case, distance $d(i,j)$ is finite for all $i,j \in V$.

In addition, we define the following sets:

\begin{enumerate}
\item $\overrightarrow{N}_{i}(l)=\{j \in V | \overrightarrow{d}(i,j)=l\}$, for each $l=1,\dots,diam(G)$;
\item $\overleftarrow{N}_{i}(l)=\{j \in V | \overleftarrow{d}(i,j)=l\}$, for each $l=1,\dots,diam(G)$.
\end{enumerate}

Moreover, according to what we did above, we can define
$\overrightarrow{\mathcal{G}}_{ij}$ and
$\overleftarrow{\mathcal{G}}_{ij}$ as the sets of the out-geodesics
and in-geodesics connecting the vertices $i$ and $j$, respectively;
the generic elements of $\overrightarrow{\mathcal{G}}_{ij}$ and
$\overleftarrow{\mathcal{G}}_{ij}$ are
$\overrightarrow{g}=\overrightarrow{g}_{ij}$ and
$\overleftarrow{g}=\overleftarrow{g}_{ij}$, respectively.

Hence, the definition of weighted in- and out-geodesics can be
easily given by setting:
\begin{enumerate}
\item $\overrightarrow{w}_{ij}(l)=\min\limits_{\overrightarrow{g} \in
\overrightarrow{\mathcal{G}}_{ij}(l)}
\{\overrightarrow{w}_{ij}(l,\overrightarrow{g})\}$;
\item $\overleftarrow{w}_{ij}(l)=\min\limits_{\overleftarrow{g} \in \overleftarrow{\mathcal{G}}_{ij}(l)}\{\overleftarrow{w}_{ij}(l,\overleftarrow{g})\}$.
\end{enumerate}

This allows us to define the $l$-th order in and out-strength of the node $i$ as \\
$$\overrightarrow{s}_{i}(l)=\sum_{j \in \overrightarrow{N}_{i}(l)} \overrightarrow{w}_{ij}(l),$$
$$\overleftarrow{s}_{i}(l)=\sum_{j \in \overleftarrow{N}_{i}(l)} \overleftarrow{w}_{ij}(l).$$

\section{Stratified communities}
\label{L_CC}

The aim of this section is to define a new indicator of community
structure around a node $i$ based on the mutual interconnections
between nodes at different distances from $i$. This new indicator
uses an extended idea of clustering coefficient, moving along
shortest paths. Hence, we are providing a conceptualization of
the stratified communities around the nodes of a network.

We introduce the indicator discussing separately the undirected and
directed case.

\subsection{Local $l$-adjacency clustering coefficient: undirected case}

Let $\mathbf{P}(l)=[p_{ij}(l)]_{i,j \in V}$ for $l=1,\dots, diam(G)$
be the matrix such that:
\begin{equation}
p_{ij}(l)= \begin{cases} \frac{w_{ij}(l)}{s_{i}(l)} & \mbox{if }  j
\in N_{i}(l) \mbox{ and } N_{i}(l) \neq \emptyset, \\ 0 &
\mbox{otherwise}.
\end{cases}
\label{pW}
\end{equation}
For the sake of completeness, the definition of the matrix
$\mathbf{P}(l)$ can be extended to the case $l=0$, by setting
$\mathbf{P}(0)=\mathbf{I}$, being $\mathbf{I}$ the identity matrix.

We define the vector of the \emph{local $l$-adjacency clustering
coefficients} of the nodes of the network 
$\mathbf{c}(l)=[c_i(l)]_{i \in V}$, as:
\begin{equation}
\mathbf{c}(l)=\mathbf{P}(l) \mathbf{c}, \label{Wclustering}
\end{equation}
where $\mathbf{c}=[c_i]_{i \in V}$ is the vector whose element
$c_{i}$ is the weighted clustering coefficient of node $i$,
defined in Barrat et al. (\cite{Barrat_2004}). \\

Notice that, when $l=0$, formula (\ref{Wclustering}) gives
$\mathbf{c}(0)=\mathbf{c}$, and then we recover the weighted local
clustering coefficient defined in \cite{Barrat_2004}.

When $l=1$, the local $l$-adjacency clustering coefficient
$\mathbf{c}(1)=[c_i(1)]_{i \in V}$ is the vector of elements:
\begin{equation}
\label{ci} c_{i}(1)=\frac{1}{s_i}\sum_{j \in V} w_{ij}c_j,
\end{equation}
where each element represents the weighted average of the clustering coefficients $c_{j}$ of the nodes $j$ which are adjacent to $i$.
This is true also for $l>1$, as in general, formula (\ref{Wclustering}) states that $c_{i}(l)$ is
the weighted average, with weights $w_{ij}(l)$, of clustering coefficients $c_{j}$ of nodes $j$
which are at distance $l$ with respect to the node $i$.
Furthermore, it is noteworthy that, in case of an unweighted graph,
the coefficient $c_{i}(l)$ reduces to a classic arithmetic mean.

The elements of the vector defined in (\ref{Wclustering}) give
insights about the community structure at a specific distance from
nodes of the graph. Indeed, the element $l$ associated to the
proposed definition of the clustering coefficient explains how the
nodes at distance $l$ from $i$ form a community in the graph.

Large values of such clustering coefficients of $i$ at high levels
$l$ suggest that $i$ is connected with well-established communities
which are far from the node itself.
In other words, the analysis of $\mathbf{c}(l)$ with $l=0,1, \dots,
diam(G)$ leads to a complete view of the graphs in terms of spatial
communities, and this might give insights on how shocks propagate.
The quantity $c_{i}(l)$ describes the \emph{community structure at
stratus $l$ around the node $i$} and the set of vectors $\mathbf{c}(l)$
defines the \textit{stratified community structure of the network} as
the value of $l$ varies.


In order to measure the overall community structure around a node
$i$, we then introduce the vector $\mathbf{h}=[h_i]_{i=0,1,\dots,
n}$ such that
\begin{equation}
h_i=\sum_{l=0}^{diam(G)} x_{l}c_{i}(l), \label{h:clustering}
\end{equation}
where $x_l \geq 0$, for each $l$, and $\sum_{l=0}^{diam(G)}
x_{l}=1$.

The vector $\mathbf{h}=[h_i]_{i=0,1,\dots, n}$ allows tracking the
whole community structure around a single node $i$, and takes into
account all the strati. Notice that the selection of a peculiar
distribution of the weights $(x_0, x_1, \dots, x_{diam(G)})$
provides the meaning of the concept of stratified community for all
the nodes of the graph. In particular, high polarization of such a
distribution at low (high) level $l$ leads to core-based
(periphery-based) identification of the community structure. The
special case $x_l = 1$ focuses attention only to communities at
stratus $l$.





\subsection{Local $l$-adjacency clustering
coefficients: directed case}\label{sec:DWcase}

We consider a directed, weighted and weakly connected graph $D$.\\
As already pointed out in Section \ref{Prel}, in addition to
weighted paths, also weighted in- and out-paths can exist and we can
focus only on a specific pattern (out-path or in-path for the node
$i$), or we can consider all edges' directions. Each choice is
reasonable and depends on the kind of problem we deal with.

For $l=1, \dots, diam(G)$, we define the matrix
$\mathbf{\bar{P}}(l)$ with the following entries:
\begin{equation}
\bar{p}_{ij}(l)= \begin{cases} \frac{\bar{w}_{ij}(l)}{\bar{s}_{i}(l)} &
\mbox{if }  j \in \bar{N}_{i}(l) \mbox{ and } \bar{N}_{i}(l) \neq \emptyset,\\
0 & \mbox{otherwise,}
\end{cases}
\label{eq:matrixPdirected}
\end{equation}
where:
\begin{itemize}
    \item[$(a)$] $\bar{N}_{i}(l)=\overrightarrow{N}_{i}(l)$, $\bar{w}_{i,j}(l)=\overrightarrow{w}_{ij}(l)$ and $\bar{s}_{i}(l)=\overrightarrow{s}_{i}(l)$ if only out-paths of node $i$ are considered. In this case, $\mathbf{\bar{P}}(l)$ will be denoted by $\overrightarrow{\mathbf{P}}(l)$;
    \item[$(b)$]  $\bar{N}_{i}(l)=\overleftarrow{N}_{i}(l)$, $\bar{w}_{i,j}(l)=\overleftarrow{w}_{ij}(l)$ and $\bar{s}_{i}(l)=\overleftarrow{s}_{i}(l)$ if only in-paths of node $i$ are considered. In this case, $\mathbf{\bar{P}}(l)$ will be denoted by $\overleftarrow{\mathbf{P}}(l)$;
    \item[$(c)$]  $\bar{N}_{i}(l)=N_{i}(l)$, $\bar{w}_{i,j}(l)=w_{ij}(l)$ and $\bar{s}_{i}(l)=s_{i}(l)$ if all the directions of the edges are taken into account. In this case,  $\mathbf{\bar{P}}(l)$ will be denoted by $\mathbf{P}(l)$.
\end{itemize}
Also in this case, we set $\mathbf{\bar{P}}(0)=\mathbf{I}$.


The definition of local $l$-adjacency clustering coefficients,
introduced in formula (\ref{Wclustering}) has to be extended to the
three cases $(a)$, $(b)$ and $(c)$.


Indeed, in the specific context of directed graphs, edges pointing
in different directions have a completely different interpretation
in terms of the resulting flow pattern.  To this aim, alternative
in-type or out-type local $l$-adjacency clustering coefficients can
be also obtained as:
\begin{equation}
\textbf{c}^{in}(l)=\overleftarrow{\mathbf{P}}(l) \textbf{c}^{in}.
\label{WDclusteringin}
\end{equation}
\begin{equation}
\textbf{c}^{out}(l)=\overrightarrow{\mathbf{P}}(l) \textbf{c}^{out}.
\label{WDclusteringout}
\end{equation}
where $\overleftarrow{\mathbf{P}}(l)$ and
$\overrightarrow{\mathbf{P}}(l)$ are matrices defined in formula
(\ref{eq:matrixPdirected}) whose elements are computed considering
only in-paths (case $(b)$) or out-paths (case $(a)$), respectively.
$\textbf{c}^{in}=[c^{in}_i]_{i \in V}$ and
$\textbf{c}^{out}=[c^{out}_i]_{i \in V}$ are the vectors whose
elements $c^{in}_{i}$ and $c^{out}_{i}$ are the in and out local
local clustering coefficients defined in \cite{CleGra}. These two
coefficients convey information about clustering of two different
patterns (in or out) within tightly connected directed
neighbourhoods.

According to the case $(c)$, we define the local $l$-adjacency
clustering coefficients as:
\begin{equation}
\textbf{c}^{all}(l)=\mathbf{P}(l) \textbf{c}^{all},
\label{WDclusteringall}
\end{equation}
where $\textbf{c}^{all}=[c^{all}_i]_{i \in V}$ is the vector whose
elements $c^{all}_{i}$ are the weighted and directed clustering coefficients provided in \cite{CleGra}.

The difference between the vectors of local $l$-adjacency clustering
coefficients in (\ref{Wclustering}) and (\ref{WDclusteringall}) lies
in the considered definition of triangles. The same triple of nodes
might be associated to one triangle (no directions of the arcs to be
taken into account) in the former case and two of them (two possible
directions for the arcs) in the latter one. In the special case of
absence of bilateral arcs, the following result holds true:
\begin{proposition}
\label{prop} Let D be a directed graph. If the graph $D$ has not
bilateral arcs, then $\mathbf{c}^{all}(l)=\frac{\mathbf{c}(l)}{2}$,
where $\mathbf{c}(l)$ is the vector of the local $l$-adjacency
clustering coefficients of the undirected underlying graph $G$.
\end{proposition}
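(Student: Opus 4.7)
The strategy is to reduce the statement to a purely local (node-by-node) identity between the two underlying clustering coefficients, and then invoke the fact that the propagation matrix $\mathbf{P}(l)$ is common to both formulas.

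First, I would observe that the matrix $\mathbf{P}(l)$ appearing in (\ref{WDclusteringall}) is defined according to case $(c)$, i.e.\ using the sets $N_{i}(l)$, the geodesic weights $w_{ij}(l)$, and the strengths $s_{i}(l)$ associated to paths in which arc directions are ignored; this is precisely the same data that determines the matrix $\mathbf{P}(l)$ built for the undirected underlying graph $G$ in (\ref{Wclustering}). Hence the two matrices coincide. By linearity of matrix multiplication, once the node-wise identity $\mathbf{c}^{all}=\tfrac{1}{2}\mathbf{c}$ is established, one obtains
\begin{equation*}
\mathbf{c}^{all}(l)=\mathbf{P}(l)\,\mathbf{c}^{all}=\tfrac{1}{2}\,\mathbf{P}(l)\,\mathbf{c}=\tfrac{1}{2}\,\mathbf{c}(l),
\end{equation*}
which is the claim.

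The core of the proof is therefore the local identity $c_{i}^{all}=\tfrac{1}{2}c_{i}$ for every $i\in V$, where $c_{i}^{all}$ is the weighted directed clustering coefficient of Clemente--Grassi computed on $D$, and $c_{i}$ is the Barrat weighted clustering coefficient of $G$. I would prove it by comparing numerators and denominators. Under the absence of bilateral arcs the bilateral-degree and bilateral-strength correction terms vanish, the total degree of $i$ in $D$ coincides with the undirected degree $d_{i}$ in $G$, and the total strength of $i$ in $D$ coincides with the undirected strength $s_{i}$, because each arc of $D$ corresponds to exactly one edge of $G$ carrying the same weight. This matches the denominators up to the standard conventional factor $2$ of the Fagiolo/Clemente--Grassi normalization.

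For the numerators, I would rewrite both as quadratic forms built from the symmetrized matrices $A+A^{\top}$ and $W+W^{\top}$, which under the no-bilateral-arcs hypothesis are exactly the adjacency and weight matrices of $G$. Each undirected triangle $\{i,j,h\}$ of $G$ then enters the directed numerator with a combinatorial multiplicity equal to twice the one with which it enters Barrat's numerator: the directed construction sums weighted contributions over all orientation patterns of the closed triple around $i$, while Barrat's formula sums once per ordered pair of neighbours. Combining the denominator identification with the factor $2$ in the numerator yields $c_{i}^{all}=\tfrac{1}{2}c_{i}$.

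The main obstacle is the combinatorial bookkeeping in the numerator, where one must verify that, after discarding the orientation patterns incompatible with the no-bilateral-arcs hypothesis, the remaining weighted contributions reassemble into exactly $2\,T_{i}^{Barrat}$. Once this is done, everything else is a routine identification of the normalizing factors, and the proposition follows.
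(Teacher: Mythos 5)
Your overall route is exactly the paper's: reduce to the node-wise identity $c_i^{all}=\tfrac{1}{2}c_i$, establish it via the symmetrization identities $\mathbf{W}(D)+\mathbf{W}^{T}(D)=\mathbf{W}(G)$ and $\mathbf{A}(D)+\mathbf{A}^{T}(D)=\mathbf{A}(G)$ together with the vanishing of the bilateral correction $s_i^{\leftrightarrow}$, and then push the factor $\tfrac12$ through the common matrix $\mathbf{P}(l)$ by linearity. That skeleton is correct and matches the paper's proof.

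However, the accounting of the factor $2$ as you describe it does not add up, and since you identify it as the ``main obstacle'' it is worth being precise. The denominators coincide \emph{exactly}: with no bilateral arcs, $s_i(1,D)\left(d_i(1,D)-1\right)-2s_i^{\leftrightarrow}(1,D)=s_i(d_i-1)$, which is literally Barrat's denominator --- there is no residual ``conventional factor $2$'' there. Likewise, each undirected triangle does \emph{not} enter the directed numerator with twice Barrat's multiplicity: writing Barrat's numerator as $\sum_{j,h}\tfrac{w_{ij}+w_{ih}}{2}a_{ij}a_{ih}a_{jh}=\left[\mathbf{W}(G)\mathbf{A}^{2}(G)\right]_{ii}$, and using the symmetrization identities on the Clemente--Grassi numerator, one finds that the two bracketed quantities are \emph{equal}; the entire factor $\tfrac12$ comes from the explicit prefactor $\tfrac12$ in
\begin{equation*}
c_{i}^{all}=\frac{\tfrac{1}{2}\left[\left(\mathbf{W}(D)+\mathbf{W}^{T}(D)\right)\left(\mathbf{A}(D)+\mathbf{A}^{T}(D)\right)^{2}\right]_{ii}}{s_{i}(1,D)\left(d_{i}(1,D)-1\right)-2s_{i}^{\leftrightarrow}(1,D)},
\end{equation*}
which in the general directed case compensates for multiple orientation patterns but here simply survives as $\tfrac12$. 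If one instead followed your stated bookkeeping (numerator doubled, denominator off by $2$), one would land on $c_i^{all}=c_i$ or $c_i^{all}=2c_i$ rather than $c_i^{all}=\tfrac{1}{2}c_i$. Once this is corrected, the rest of your plan goes through verbatim.
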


\begin{proof}
We define $\textbf{W}(D)$ and $\textbf{W}(G)$  the weighted
adjacency matrices of the graphs $D$ and $G$. The local clustering
coefficient for weighted and directed network is defined as (see
\cite{CleGra}):

\begin{equation}
    c_{i}^{all}=\frac{\frac{1}{2}\left[ \left( \textbf{W}(D)+\textbf{W}^{T}(D)\right) \left(
    \textbf{A}(D)+\textbf{A}^{T}(D)\right) ^{2}\right] _{ii}}{s_{i}(1,D)\left( d_{i}(1,D)-1\right)
    -2s_{i}^{\leftrightarrow }(1,D)}
\label{call}
\end{equation}
where $s_{i}(1,D)$ and $d_{i}(1,D)$ are respectively the total strength and the degree of the node $i$, whereas $s_{i}^{\leftrightarrow}(1,D)$ is the
strength related to bilateral arcs between $i$ and its
adjacent nodes.

Since the graph $D$ has not bilateral arcs,
$s_{i}^{\leftrightarrow}(1,D)=0$. Additionally, if nodes $i$ and $j$
are adjacent, a (unique) weighted arc between $i$ and $j$ exists so
that if $\overrightarrow{w}_{ij}(1,D)> 0$ then
$\overleftarrow{w}_{ij}(1,D)=0$ or vice versa. As a consequence,
$\textbf{W}(D)+\textbf{W}^{T}(D)=\textbf{W}(G)$ and
$\textbf{A}(D)+\textbf{A}^{T}(D)=\textbf{A}(G)$.

Furthermore,
$$s_{i}(1,D)=\overrightarrow{s}_{i}(1,D)+\overleftarrow{s}_{i}(1,D)=\sum_{j
\in \overrightarrow{N}_{i}(1)} \overrightarrow{w}_{ij}(1,D)+\sum_{j
\in \overleftarrow{N}_{i}(1)} \overleftarrow{w}_{ji}(1,D)=\sum_{j
\in N_{i}(1)}w_{i,j}(1,G)=s_{i}(1,G).$$
A similar chain of equalities
entails that  $d_{i}(1,D)=d_{i}(1,G)$.

Hence, (\ref{call}) yields, $\forall i=1,...,n$:

\begin{equation}
c_{i}^{all}=\frac{\frac{1}{2}\left[ \textbf{W}(G)\textbf{A}^{2}(G)\right] _{ii}}{s_{i}(1,G)\left( d_{i}(1,G)-1\right) }   =\frac{c_{i}}{2}
\label{call1}
\end{equation}

and, by formula (\ref{WDclusteringall}):
\begin{equation}
\textbf{c}^{all}(l)=\mathbf{P}(l)
\textbf{c}^{all}=\frac{1}{2}\mathbf{P}(l)
\textbf{c}=\frac{1}{2}\textbf{c}(l).
\end{equation}

    \end{proof}

Analogously to formula (\ref{h:clustering}), we can define
$\textbf{h}^{in}=[h^{in}_i]_{i \in V}$,
$\textbf{h}^{out}=[h^{out}_i]_{i \in V}$, and
$\textbf{h}^{all}=[h^{all}_i]_{i \in V}$, where:
\begin{equation}
h^{in}_i=\sum_{l=0}^{diam(G)} x_l {c}_i^{in}(l), \qquad
h^{out}_i=\sum_{l=0}^{diam(G)} x_l {c}_i^{out}(l), \qquad
h^{all}_i=\sum_{l=0}^{diam(G)} x_l {c}_i^{all}(l).\label{hifrecce}
\end{equation}

\subsection{How the local $l$-adjacency clustering
coefficients work: an illustrative example} \label{sec:example} The
classical clustering coefficient does not give insights on both the
topological structure and the stratified community structure of
the whole network, being a measure of the local community structure
concentrated around the nodes of the network. For the same reason
also the global clustering coefficient of the network, which is
given by the average of the local version around the nodes, is often
not very informative. Furthermore, a network can be highly clustered
at a local level but not on a global level, so that the average of
the local clustering could not well represent the global
characteristics of the network (see \cite{Estrada2011}).

To illustrate how the local $l$-adjacency clustering coefficients
effectively map the communities structure in the network, we show
how they work by comparing two small graphs. For the sake of
simplicity, we limit our investigation to undirected and unweighted
graphs, as interesting remarks can be done also in this very
simplified, but meaningful, case.




Let us consider the two graphs $G$ and $G'$, sharing the same number
of nodes (namely, 9) and same diameter (namely, 4), but different
topology, having $G'$ more arcs than $G$ (hence, showing a stronger
community structure, see Figure \ref{F:G}). In both graphs, node 1
shares the same neighbours and, being part of the same triangles, it
has the same clustering coefficient $c_{1}(0)$. This coefficient is
then not effective in capturing the actual position of the node with
respect to the communities in the network. Indeed, the classical
clustering coefficient gives insights about how the node 1 is
embedded in a cohesive group only respect to its neighbours.

To have a more reliable information about how the node is located in
respect to the whole structure, in particular to existing
communities, we need to analyse its position looking deeper than its
neighbours and the $l$-adjacency clustering coefficients (with
$l=1,..,4$) in this sense are meaningful.


\begin{figure}[!h]
    \centering
    \includegraphics[scale=0.3]{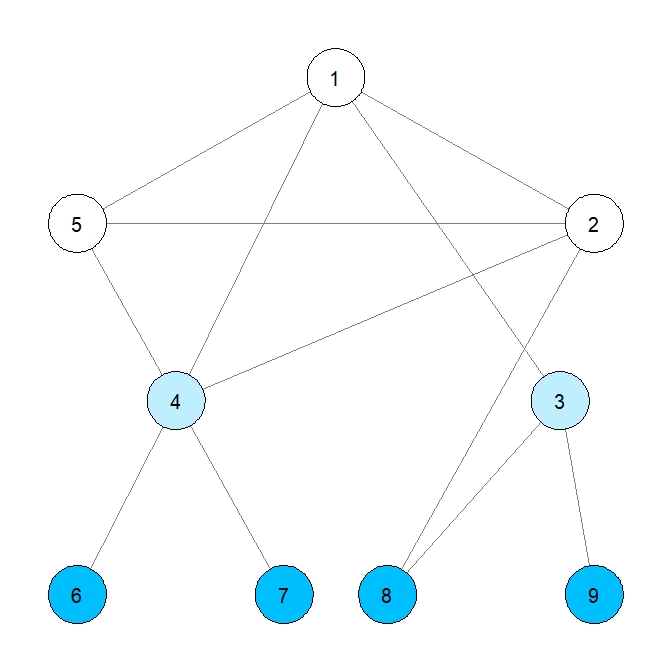}
    \includegraphics[scale=0.3]{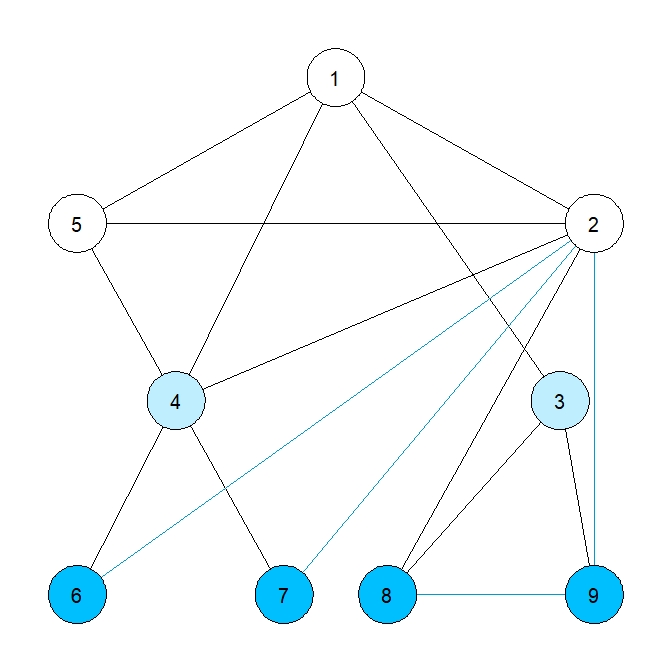}
    \caption{Graphs $G$ (left side) and $G'$ (right side).}
    \label{F:G}
\end{figure}

\begin{table}[!h]
    \tiny
    \centering
    \begin{tabular}{| c | c | c|  c| c| c| c| c| c| c| }
        \hline
        \hline
        Node & $c_{i}(0)$ & $c_{i}(1)$  & $c_{i}(2)$ & $c_{i}(3)$    & $c_{i}(4)$ & $h_{i}$  &  $h_{i}$  & $h_{i}$   \\
        & & & &  & & Decreasing Weights &  Uniform Weights & Increasing Weights \\

        \hline
        \hline

        1 &0.50 &0.45   &0  &0  &0  &0.32   &0.19   &0.09 \\
        2 &0.50 &0.45   &0  &0  &0  &0.32   &0.19   &0.09 \\
        3 &0    &0.17   &0.60   &0  &0  &0.12   &0.15   &0.11 \\
        4 &0.30 &0.40   &0  &0  &0  &0.22   &0.14   &0.07 \\
        5 &1    &0.43   &0  &0  &0  &0.53   &0.29   &0.14 \\
        6 &0    &0.30   &0.50   &0  &0  &0.14   &0.16   &0.11 \\
        7 &0    &0.30   &0.50   &0  &0  &0.14   &0.16   &0.11 \\
        8 &0    &0.25   &0.45   &0  &0  &0.12   &0.14   &0.09 \\
        9 &0    &0      &0.25   &0.60   &0  &0.10   &0.17   &0.17 \\
        \hline
        \hline
    \end{tabular}
    \caption{Graph $G$: $l$-adjacency clustering coefficients, and elements of vector $\mathbf{h}$ with different choices of weights.}
    \label{tab:CC4a}
\end{table}

\begin{table}[!h]
    \tiny
    \centering
    \begin{tabular}{| c | c | c|  c| c| c| c| c| c| c| }
        \hline
        \hline
        Node & $c_{i}(0)$ & $c_{i}(1)$  & $c_{i}(2)$ & $c_{i}(3)$    & $h_{i}$  &  $h_{i}$  & $h_{i}$   \\
        & & & &  & Decreasing Weights &  Uniform Weights & Increasing Weights \\

        \hline
        \hline

        1 &0.50      &0.53  & 0.83  &0   &0.50 &     0.47 &  0.34 \\
        2 &0.29      &0.76  & 0.33  &0   &0.37   &0.35   &0.24 \\
        3 &0.33      &0.61  & 0.60   &1 &    0.52   & 0.63 &     0.76 \\
        4 &0.50      &0.76  & 0.56  &0   &0.51   &0.45  & 0.31 \\
        5 &1     &0.43   &0.73  &0   &0.70   & 0.54     & 0.36 \\
        6 &1     &0.39   &0.77   &0.33   &0.74 &     0.62    &0.53 \\
        7 &1     &0.39   &0.77   &0.33   &0.74  & 0.62   &0.53 \\
        8 &0.67     & 0.43   &0.80  &0  & 0.55 &     0.47 &  0.34 \\
        9 &0.67     & 0.43   &0.80  &0  & 0.55 &     0.47 &  0.34 \\

        \hline
        \hline
    \end{tabular}
    \caption{Graph $G'$: $l$-adjacency clustering coefficients, and elements of vector $\mathbf{h}$ with different choices of weights.}
    \label{tab:CC4b}
\end{table}

Comparing the values of the clustering coefficients in Tables
\ref{tab:CC4a} and \ref{tab:CC4b}, the coefficient of node 1
decreases with respect to $l$ in the case of $G$ (from 0.5 to 0)
moving through paths of length greater than 1, whereas in case $G'$
it increases till $l=2$ (from 0.5 to 0.83) and becomes null for
$l=3$.

Therefore, the different strati of the community structure around
the node 1 well reflect the position of such a node with respect to
the way the other nodes are connected in the structure.

Through elements $h_i$ we are able to simultaneously consider all
the communities at the different strati. We control the impact of
each coefficient through their weights $x_l$. We consider here three
possible scenarios for the weights $x$'s:
\begin{itemize}
\item Decreasing weights: $x_{l}=\frac{(l+1)^{-1}}{\sum_{h=0}^{diam(G)}(h+1)^{-1}}=\frac{(l+1)^{-1}}{H_{G}}$ where $H_{G}$ is the harmonic number of order $diam(G)+1$, for each $l$;
\item Uniform weights: $x_{l}=\frac{1}{diam(G)+1}$, for each $l$;
\item Increasing weights: $x_{l}=\frac{(l+1)}{\sum_{h=0}^{diam(G)}(l+1)}=\frac{2(l+1)}{\left(diam(G)+1\right)\left(diam(G)+2\right)}$, for each
$l$.
\end{itemize}

Intuitive interpretations of the weights arise. Decreasing weights,
for instance, reduce the impact on the node of high distances when
assessing the community. Notice that, the elements in $\mathbf{h}$
do not provide similar information of the classical average
clustering coefficient. Instead, we are measuring the position of
the node inside the network looking at each stratus. These
indicators then provide an overall look and, at the same time, they
track the node distances from the communities in the network.

\section{Empirical experiments}\label{NA}
In order to see how the proposed indicator is effective in
describing the stratified communities of a node, we test it on the
peculiar business network of the U.S. airport, where nodes are the
airports and arcs are weighted on the basis of the flights scheduled
among them in a given year. The considered reference year in the
proposed experiments is 2017.
The network is constructed by using the Air Carrier Statistics
database (available on the U.S. Department of Transportation\footnote{Data are collected by the
    Office of Airline Information, Bureau of Transportation Statistics,
    Research and Innovative Technology Administration.}), also
known as the T-100 data bank, that contains domestic and
international airline market and segment data. Both certificated U.S. air carriers and foreign
carriers (having at least one point of service in the United States
or one of its territories) report monthly traffic
information.
The weight of an arc corresponds to the number of emplaned
passengers\footnote{The term ``emplaned passengers'', widely used in
the aviation industry, refers to passengers boarding a plane at a
particular airport. Since the majority of airport revenues are
generated, directly or indirectly, by emplaned passengers, this
number is the most important air traffic metric. Data consider the
total number of revenue passengers boarding an aircraft (including
originating, stopover, and transfer passengers) in both scheduled
and non-scheduled services.}. It considers revenue emplaned
passengers within the U.S., and passengers emplaned outside U.S. but
deplaned within the U.S. as well.

In the reference year 2017, the airport network has 1701 nodes and
27005 arcs, considering both domestic and international flights.
Density is around 0.001, showing a very sparse network. Moreover,
significant differences are observed between big and small airports.
To give a preliminary idea of the network, Figure \ref{F:BI2Ynet}
depicts the U.S. domestic airport network. In order to preserve the
clarity of the figure, we reported only arcs with weights greater
than 95th percentile (equal to 198,540) of weights' distribution. In
other words, for the sake of simplicity we are displaying only
routes with more than about 200,000 enplaned passengers.

\begin{figure}[!h]
    \centering
    \includegraphics[scale=0.8]{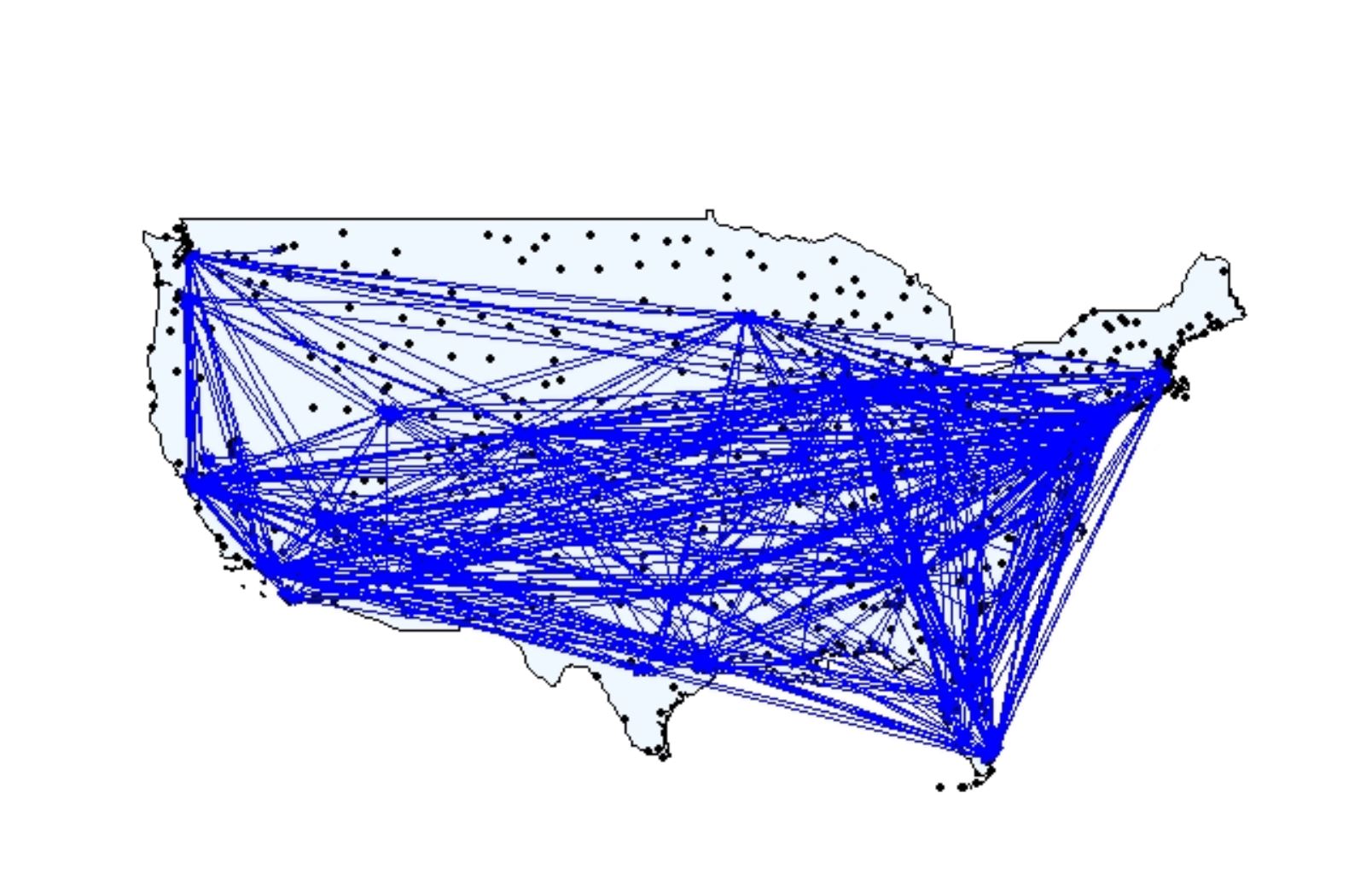}
\caption{Domestic U.S. airport network built on the basis of 2017
data. Only arcs with weights greater than 95th percentile of weights'
distribution have been reported.}
    \label{F:BI2Ynet}
\end{figure}

Figure \ref{F:DSdist} reports the distributions of total strength
for U.S. airports, capturing the total passenger traffic during
2017. Airports have been split according to Federal Aviation
Administration (FAA) categories. According to FAA, a large hub is an
airport which accounts for at least 1\% of total U.S. passenger
enplanements. A medium hub is defined as an airport accounting for a
percentage of the total passenger enplanements ranging between
0.25\% and 1\% (see Tables \ref{tab:CC3} and \ref{tab:CC3b} in the
Appendix for the list of large and medium U.S. hubs). A small hub is
associated to a percentage ranging between 0.05\% and 0.25\%. Last
categories concern smaller airports, that are divided between
non-hub and non-primary if they have respectively more or less than
10,000 annual passengers.

\begin{figure}[!h]
    \centering
    \includegraphics[scale=0.3]{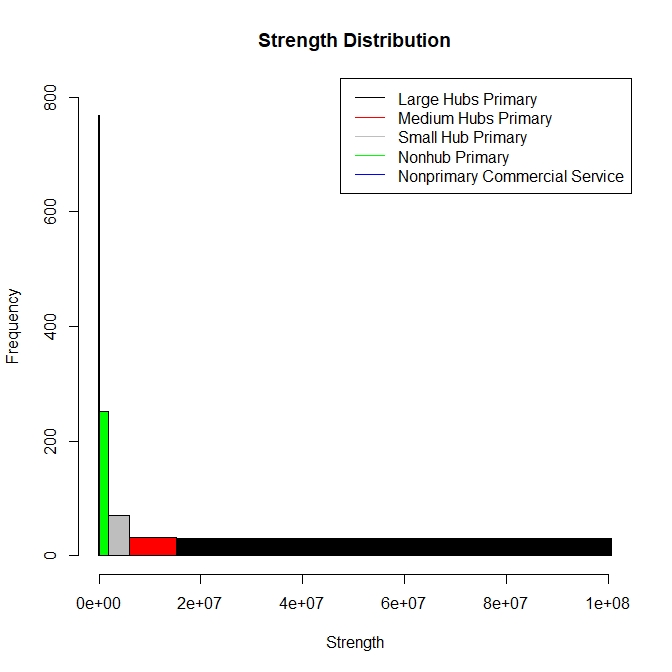}
    \includegraphics[scale=0.3]{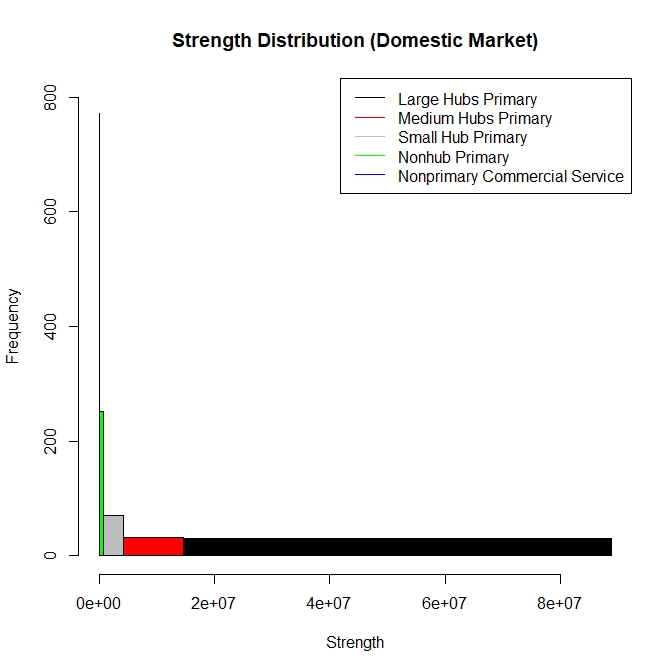}
\caption{Total strength distribution of U.S. airports for domestic
and international markets and only domestic market respectively.
Airports are classified according to FAA categories.}
    \label{F:DSdist}
\end{figure}

System-wide passenger enplanements is not far from 900 millions of
passengers. The 30 large hubs move 70\% of the passengers, and this
ratio becomes higher than 85\% if also medium hubs are included. These data are in line with the ones published by
\cite{FAA}. \\
Furthermore, the degrees of the nodes of the
network highlight that each U.S. airport is connected on average to
23 airports, unless large hubs are connected to more than 200
airports.

If we focus only on domestic market, we have roughly 740 millions of
passengers. In this case, the network is characterized by 1149 U.S.
airport and 20445 connections between them. As shown by the strength
distribution (Figure \ref{F:DSdist}, right side), a significant
proportion of traffic (around 85\%) is concentrated around the top
61 airports, considering both large and medium hubs.

For the sake of brevity, we do not report a graphical representation
of the strength distributions for in and out-flows. However, it is
worth mentioning that, for both indicators, in and out results are
strongly correlated with the total strength distribution. In other
words, except for some specific airports, we observe similar
patterns between the number of passengers departing from and
arriving at each airport.

In order to compute the local $l$-adjacency clustering coefficients,
we consider only the U.S. domestic market network, preventing
possible distorted effects due to international flights. Indeed,
data regarding connections between airports located outside of U.S.
territory are not included in the dataset, so that if we include
these airports we are not able to effectively catch the presence of
triangles. Notice that the restriction to the domestic flights does
not lead to a noticeable bias of the analysis of the U.S. airport
network as a whole, since domestic market covers roughly 80\% of
total passengers that arrive or depart from the U.S. airports in
2017.

Figure \ref{F:Clust} displays the distributions of the components of
the local $l$-adjacency clustering coefficients vector
$\mathbf{c}^{all}(l)$ computed at different levels $l$ and
considering as nodes either all the airports (on the left side) or
only large and medium hubs (on the right side). We also report
synthetic measures in $\mathbf{h}(l)$ for alternative choices of
weights $x$'s.

As a premise, the classical global clustering coefficient, obtained
as mean of the coefficients $c^{all}_{i}(0)$  over the nodes
$i=1,...,n$, is equal to $0.56.$ When referring to the local
$l$-adjacency clustering coefficients, we notice that the
distribution shows high volatility, enhancing relevant differences
between airports, and negative skewness, showing a median equal to
0.67, significantly greater than the mean.


\begin{figure}[!h]
    \centering
    \includegraphics[scale=0.3]{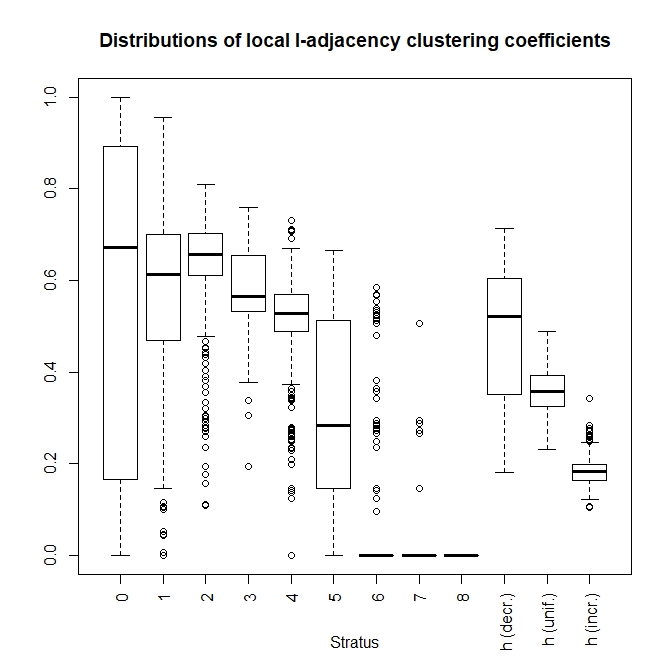}
    \includegraphics[scale=0.3]{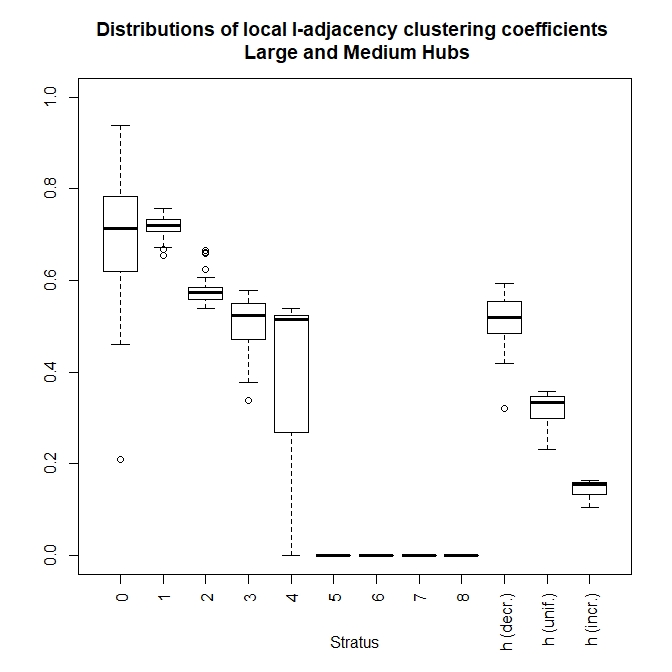}
    \caption{Distributions of the elements of the vector of the $l$-adjacency clustering coefficients
$\mathbf{c}^{all}(l)$ computed at different levels $l$. Furthermore,
the distributions of the elements of $\mathbf{h}^{all}$ are reported
for three different choices of weights (decreasing, uniform and
increasing respectively). On the right side, the same distributions
are computed considering only large and medium hubs.}
    \label{F:Clust}
\end{figure}

Focusing on large and medium airports (Figure \ref{F:Clust}, right
side), the average clustering increases, as the mean is equal to
0.69 and the median is equal to 0.71.
Except for the Ted Stevens Anchorage
International\footnote{$c^{all}_{i}(0)$ is equal to 0.21 for this
airport.}, a medium hub located in Alaska, all relevant airports in
terms of passengers traffic have a clustering coefficient not lower
than 0.5. The different behavior of the Anchorage airport can be
easily justified by the specific characteristics of this hub. The
airport is indeed connected to strategic hubs and to some other
remote airports in U.S. as well. Among larger hubs, the highest
rankings are instead observed for Ontario International (CA) and
Southwest Florida International (FL). Both these airports are
characterized by a low proportion of direct connections, but they
are on average connected to airports that are connected each other.

Different patterns, in terms of classical clustering coefficient,
between larger and smaller hubs can be partially explained also by
the number of geodesic paths moving from the nodes and with a given
length. To this aim we refer to Figure \ref{F:Dist}, which reports
the percentage of the geodesic paths of a fixed length (vertical
axis) versus the total strength (horizontal axis) for all the nodes
of the network.

On upper left-side, the figure depicts the proportion of geodesics
of length 1 for each airport. Large and medium hubs are on average
directly connected to 15\% and 11\% of total airports, while smaller
hubs are directly connected to 1\% of the total nodes.

\begin{figure}[!h]
    \centering
    \includegraphics[scale=0.55]{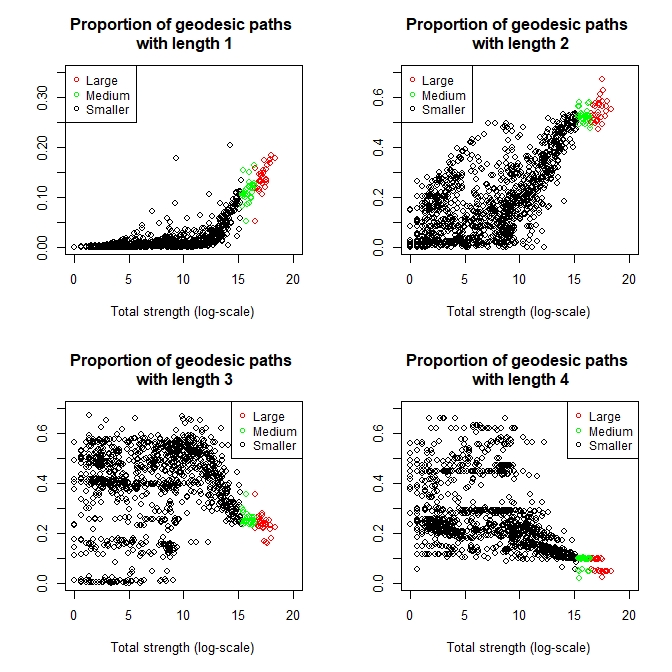}
    \caption{Proportion of geodesic paths with different lengths ($1,2,3,4$, respectively)
vs total strength (reported on log-scale) of each airport. Large and
medium hubs are reported in red and green respectively.}
    \label{F:Dist}
\end{figure}

Moving to the analysis of the local $l$-adjacency clustering
coefficient for node $i$ when $l=1$ of the type $c^{all}_{i}(1)$,
such a coefficient seizes possible connections of the $i-th$ airport
with high clustered areas, reachable from it with one stopover. In
this case, we observe a slight reduction of average clustering and a
general decrease of the variability between different airports.
Large and medium hubs have instead a different behavior, showing an
average increase of the $1$-adjacency clustering coefficients (mean
and median move respectively from 0.69 and 0.71 to 0.72 and 0.73
respectively). It is worth mentioning the considerably low
volatility of the distribution of the components of
$\mathbf{c}^{all}(1)$ for these hubs. The elements of the
$\mathbf{c}^{all}(1)$ range indeed in the interval $(0.65-0.76)$.

Results show that larger hubs are highly clustered but, according to
the local $l$-adjacency clustering coefficient with $l=1$, they are
also directly connected to strong communities, confirming their
strategic role in the airport system.

Focusing on higher levels in terms of distances (see Figure
\ref{F:Dist}, upper right side), we detect a significant proportion
of geodesics of length 2 in the network. On average, large and
medium hubs are respectively connected to 55\% and 52\% of total
airports via geodesic paths of length 2. Through these paths they
reach strong communities as well as non-primary hubs characterized
by a low clustering coefficient. Hence,  $c^{all}_{i}(2)$ is lower
than $c^{all}_{i}(1)$ for all airports of label $i$ belonging to
this group, with reductions that vary between 2\% and 38\%. Smaller
hubs reach instead 20\% of the nodes in two steps. Typically, they
are connected to high clustered areas showing a $c^{all}_{i}(2)$
higher than $c^{all}_{i}(1)$ and $c^{all}_{i}(0)$.

As regard to higher strati, we observe in Figure \ref{F:Dist} a
proportion of geodesics of lengths 3 and 4, equal to 40\% and 27\%
for small hubs, respectively. Large and medium hubs reach instead
25\% and 8\% of airports in 3 and 4 steps, respectively. As a
consequence, the local $l$-adjacency clustering coefficient is
slowly decreasing with respect to $l$ for small hubs, while an
higher reduction is observed for larger hubs. For the latter
category, it is worth noting the high volatility of the components
of $\mathbf{c}^{all}(4)$. In particular, roughly an half of relevant
hubs has a value of such clustering coefficient higher than 0.5.
Seattle-Tacoma International (WA), Ted Stevens Anchorage
International (AK), Daniel K. Inouye International (HI) and Kahului
(HI) airports show instead very low local $4$-adjacency clustering
coefficients ($\leq 0.2$), mainly justified by the fact that
geodesics of length 4 usually connect remote airports with a weak
community structure at stratus 0.

In the line with what evidenced with the case $l=4$, one can notice
that, on average, small airports are connected to the 11\% of total
airports by geodesics of length 5. However, a very high volatility
is observed in this class of airports. Some specific non-primary
airports are able to reach more than an half of the airports through
geodesics of length 5. These patterns justify the significant
volatility and a not negligible average of the elements of
$\mathbf{c}^{all}(5)$. Larger hubs have instead very few connections
at this stratus ($< 1\%$) leading to a clustering coefficient close
to zero. This argument is confirmed and furtherly stressed for
strati greater than 5. Only few nonprimary hubs $i$ are connected to
some other nodes through geodesic paths with length larger than 5,
hence showing values of $c^{all}_{i}(l)$ greater than zero for
$l>5$. Indeed, typically, these connections regard relations between
very remote and without a strong community structure airports. For
instance, Blakely Island (WA) and Tatitlek (AK) airports are
connected by a geodesic path of length 8\footnote{The geodesic path
is given by the following sequence of edges: Blakely Island --
Friday -- Kenmore Air Arbor -- Roche Harbor Country -- Seattle
Tacoma International airport -- Ted Stevens Anchorage International
Airport Country -- Beluga airport -- Merrill Field Anchorage Airport
-- Tatitlek airport}.

The values of the elements of $\mathbf{h}^{all}$ in Figure
\ref{F:Clust} synthesizes the overall community structure of each
node, thus providing a measure of the relevance of the node in the
network. The choice of weights $x_l$ can modulate the intensity of
the elements of $\mathbf{c}(l)$ in contributing to the overall
stratified community structure, giving to this indicator a high
degree of flexibility.

Here, we consider the three possible scenarios already used in
Section \ref{sec:example} (i.e. decreasing, uniform and increasing
weights). For instance, assuming that weights $x$'s are decreasing,
we are reducing the impact of the elements of local $l$-adjacency
clustering coefficients $\mathbf{c}(l)$ with respect to the whole
system when the distance $l$ increases. In particular, by
concentrating the mass of weights over the small values of $l$, we
take into major consideration the community structures close to the
nodes of the graph. In this case, the average of the components of
$\mathbf{h}^{all}$ is equal to 0.48, and it is higher for large and
medium hubs (equal to 0.54). Therefore, large and medium airports
are confirmed to be strategic hubs in the network. Indeed, on one
hand, these airports are involved in strong communities at low
strati; on the other hand, they are directed connected to high
clustered areas.

Differently, the cases of either uniform weights or concentration of
the $x$'s over large values of $l$ emphasize the relevant role of
peripheral communities of airports. Since these scenarios are more
sensitive to communities far from the nodes, we observe a reduction
of the average of the components of $\mathbf{h}^{all}$, equal to
0.36 (uniform weights) and 0.18 (increasing weights), and higher
values for smaller hubs.

We focus now on computing the in and out local $l$-adjacency
clustering coefficients by means of a separate evaluation of in- and
out-paths. As stressed before, the airport network is highly
symmetric so that it is usually analysed as an undirected one (see
\cite{Barrat_2004})). In our case, we observe a strong positive
correlation (close to 1) between in- and out-degree (and between in
and out-strength). To assess the symmetry of the network, Fagiolo
(\cite{Fagiolo_2006}) proposes a specific measure $S$. If the value
of $S$ is close to zero, then an empirically-observed network is
sufficiently symmetric to justify an undirected network analysis. In
our case, we obtain 0.02. This index becomes 0.19 when weights are removed. Hence, network is weakly asymmetric with a
more pronounced behaviour when weights are not considered. However, although
direct connections are highly correlated, some differences could be
observed when we focus on long geodesics.

As a consequence, at stratus $l=0$, in- and out- local $l$-adjacency
clustering coefficients are very similar (see Figure
\ref{F:ClustInOut}) and lower than $c^{all}_{0}$. With the directed
(in and out) local $l$-adjacency clustering coefficients, we are
focusing on specific patterns. Indeed we neglect some types of
triangles (like cycles and middleman triangles, according to the
classification provided in \cite{Fagiolo_2007}) and consider only
directed paths.

On average we observe slightly higher out-clustering coefficients
for larger hubs and lower ones for smaller airports. However, there
is not an univocal pattern among the airports, although in many
cases differences are negligible. In the class of medium hubs, an
interesting case is the node $i$ associated to the Luiz Munoz Marin
International Airport in San Juan (Puerto Rico), characterized by a
$c^{out}_{i}(0)$ equal to 0.81 against a $c^{in}_{i}(0)$ equal to
0.77. In other words, this airport is more involved in weighted
triangles of out-type than in-triangles. This evidence is partially
justified by a number of passengers departure higher than arrivals,
probably motivated by higher movements towards U.S. than vice versa.

\begin{figure}[!h]
    \centering
    \includegraphics[scale=0.3]{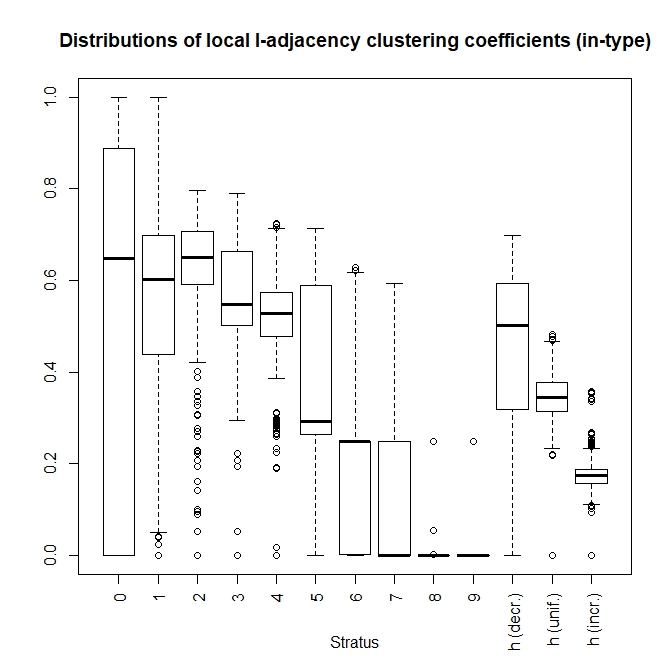}
    \includegraphics[scale=0.3]{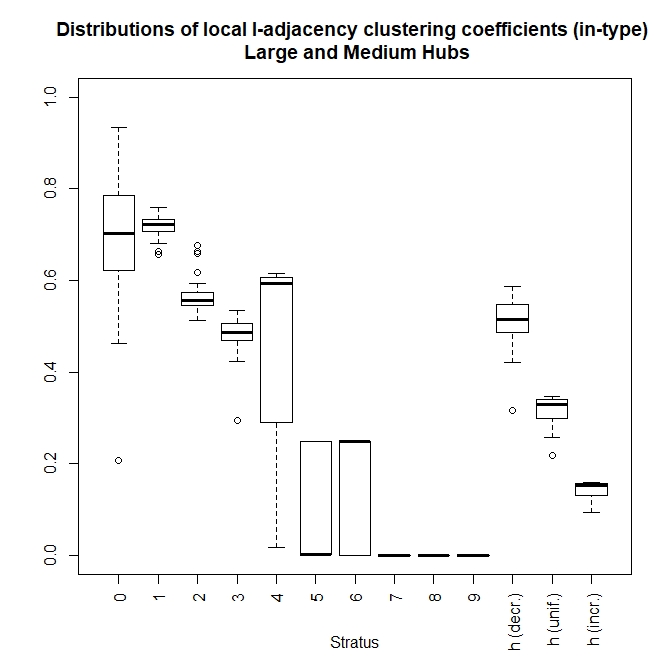}
    \includegraphics[scale=0.3]{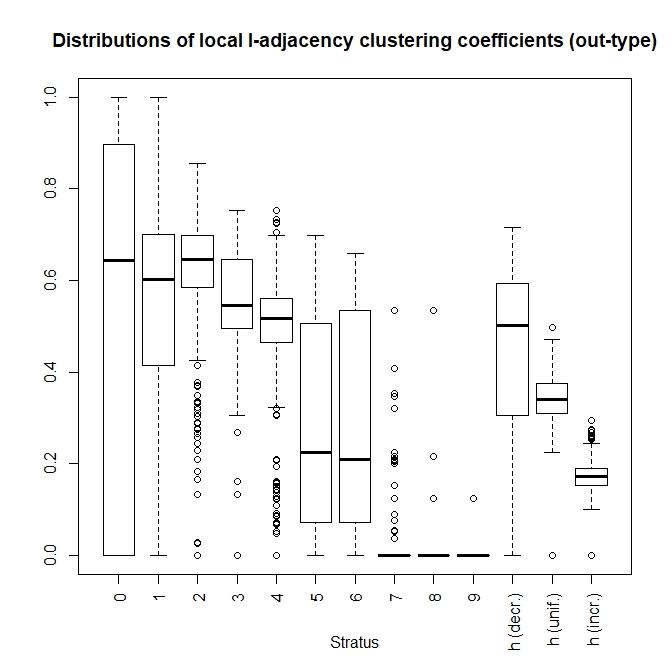}
    \includegraphics[scale=0.3]{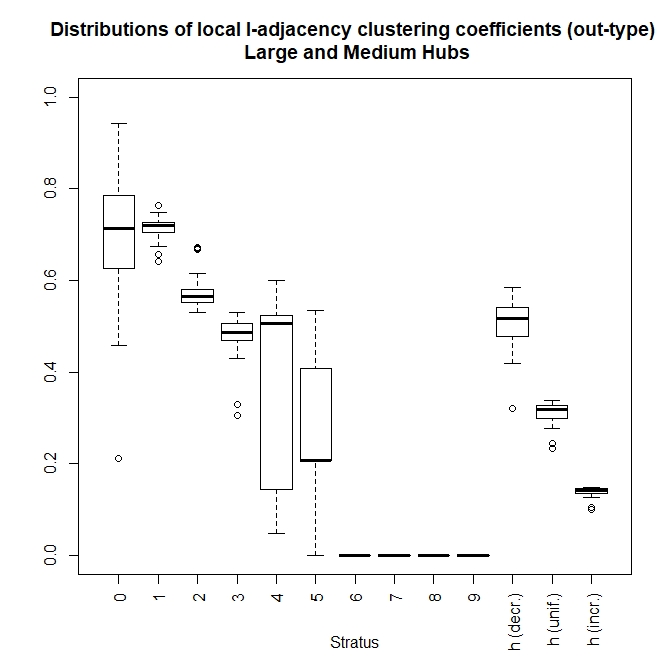}
\caption{Upper left figure displays the distributions of local
$l$-adjacency clustering coefficients of in-type in
$\mathbf{c}^{in}(l)$ and distributions of the elements of
$\mathbf{h}^{in}$ for the considered different choices of weights
(decreasing, uniform and increasing respectively). On the right
side, the same distributions are computed by considering only large
and medium hubs. In the bottom part, we report the distributions of
the components of $\mathbf{c}^{out}(l)$ and $\mathbf{h}^{out}$ for
all the airports (left side) and only for large and medium hubs
(right side).}
    \label{F:ClustInOut}
\end{figure}

The directed $l$-adjacency clustering coefficients display similar
distributions and, on average, lower values than the elements of
$\mathbf{c}^{all}(l)$ until $l$ assumes values equal to 4.
Remarkable differences are instead observed for higher strati. In
particular, stronger communities of in-type are observed in
peripheral nodes. Since stratus 5, we observe indeed higher values
of the components of $\mathbf{c}^{in}(l)$ than
$\mathbf{c}^{out}(l)$.

The analysis of the synthetic indicators given by the components of
$\mathbf{h}^{in}$ and $\mathbf{h}^{out}$, when decreasing weights
are considered, confirms the relevance of large and medium hubs in
terms of community structures (both in and out)\footnote{This
evidence is a consequence of the high correlation between in and out
$l$-adjacency correlation coefficients computed for low values of
$l$ (at this regard, see Figure \ref{F:Corr})}. Instead, when we
base our analysis on increasing weights, as expected the role of
peripheral nodes is emphasized. In this case a lower correlation
(see Figure \ref{F:Corr}) is observed depending on the specific
behavior of each airport. Furthermore, although the network is only
weakly asymmetric in terms of adjacency matrix, different patterns of long directed
paths are also caught by the synthetic measure. We have indeed a
slight prevalence of community structures of in-type.

\begin{figure}[!h]
    \centering
    \includegraphics[scale=0.4]{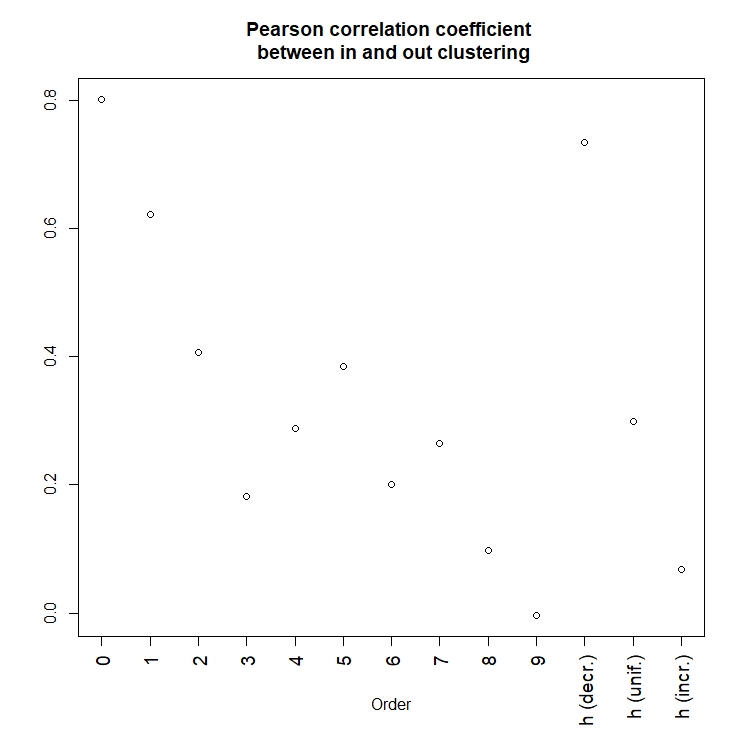}
\caption{Pearson correlation coefficient between in and out
$l$-adjacency correlation coefficients at different strati $l$.}
    \label{F:Corr}
\end{figure}

\newpage

\bigskip
\section{Conclusions}\label{Concl}
Interconnection plays a fundamental role in the business research
context. As well-known, in network theory, the level of
interconnectivity in the neighbourhood of a node is typically
assessed by means of the clustering coefficient and captures the
community structure associated to the considered node. Moving from
this fact, we exploit the concept of community structure to
understand the contextualization of the nodes within the overall
system. In particular, we provide a generalization of the concept of
clustering coefficient in order to catch both the presence of
clustered areas around a node and/or high levels of mutual
interconnections at different distances from the node itself. With
respect to classical clustering coefficient, we are able to capture
in a better way the topological structure of the whole network and
to map the presence of stratified community structures in the
network at different levels. Furthermore, we also define a synthetic
indicator for each node in order to simultaneously consider all the
coefficients. Being this indicator dependent on a set of weights of
the strati, we allow for a degree of flexibility in order to
modulate the effects of both adjacent nodes and peripheral nodes.

An empirical application to U.S. domestic air traffic network is
developed. Results show the effectiveness of these measures in
catching the peculiar characteristics of different nodes in the
airport network. In particular, focusing on large and medium hubs,
we are able to emphasize their strategic role in the airport system.
We observe, indeed, that larger hubs are not only highly clustered
but also at the center of strong communities. When different
communities strati are analysed, the effects of indirect connections
with remote airports are emphasized. Finally, although the network
is only weakly asymmetric and it is typically analysed as an
undirected network in the literature, we show that a separate
evaluation of directed paths at high levels can be useful to
identify specific patterns in terms of in and out-communities.

\bibliography{Myref}

\begin{thebibliography}{}

\bibitem[Ausloos and Lambiotte, 2007]{ausloos2007clusters}
Ausloos, M. and Lambiotte, R. (2007).
\newblock Clusters or networks of economies? {A} macroeconomy study through
  gross domestic product.
\newblock {\em Physica A: Statistical Mechanics and its applications},
  382(1):16--21.

\bibitem[Barigozzi et~al., 2011]{Barigozzi2011}
Barigozzi, M., Fagiolo, G., and Mangioni, G. (2011).
\newblock Identifying the community structure of the international-trade
  multi-network.
\newblock {\em Physica A: Statistical Mechanics and its Applications},
  390(11):2051--2066.

\bibitem[Barrat et~al., 2004]{Barrat_2004}
Barrat, A., Barth\'elemy, M., Pastor-Satorras, R., and Vespignani, A. (2004).
\newblock The architecture of complex weighted networks.
\newblock {\em Proceedings of the National Academy of Sciences},
  101(11):3747--3752.

\bibitem[Cerqueti et~al., 2018]{Roy}
Cerqueti, R., Ferraro, G., and Iovanella, A. (2018).
\newblock A new measure for community structure through indirect social
  connections.
\newblock {\em Expert Systems with Applications}, 114:196--209.

\bibitem[Clemente and Grassi, 2018]{CleGra}
Clemente, G. and Grassi, R. (2018).
\newblock Directed clustering in weighted networks: a new perspective.
\newblock {\em Chaos, Solitons \& Fractals}, 107(26-38).

\bibitem[Colizza et~al., 2007]{Colizza}
Colizza, V., Pastor-Satorras, R., and Vespignani, A. (2007).
\newblock Reaction-diffusion processes and metapopulation models in
  heterogeneous networks.
\newblock {\em Nature Physics}, 3:276--282.

\bibitem[Essamri et~al., 2019]{Essamri2019}
Essamri, A., McKechnie, S., and Winklhofer, H. (2019).
\newblock Co-creating corporate brand identity with online brand communities: a
  managerial perspective.
\newblock {\em Journal of Business Research}, 96:366--375.

\bibitem[Estrada, 2011]{Estrada2011}
Estrada, E. (2011).
\newblock {\em The structure of complex networks: theory and applications}.
\newblock Oxford University Press.

\bibitem[Fagiolo, 2006]{Fagiolo_2006}
Fagiolo, G. (2006).
\newblock Directed or {Undirected? A} new index to check for directionality of
  relations in socio-economic networks.
\newblock {\em Economics Bulletin}, 3(34):1--12.

\bibitem[Fagiolo, 2007]{Fagiolo_2007}
Fagiolo, G. (2007).
\newblock Clustering in complex directed networks.
\newblock {\em Physical Review E}, 76(2).

\bibitem[{Federal Aviation Administration}, 2017]{FAA}
{Federal Aviation Administration} (2017).
\newblock {Passenger Boarding (Enplanement) and All-Cargo Data for U.S.
  Airports}.
\newblock Technical report, {Federal Aviation Administration}.

\bibitem[H{\aa}kansson and Ford, 2002]{Haakansson2002}
H{\aa}kansson, H. and Ford, D. (2002).
\newblock How should companies interact in business networks?
\newblock {\em Journal of Business Research}, 55(2):133--139.

\bibitem[Jia and Jiang, 2012]{Jia}
Jia, T. and Jiang, B. (2012).
\newblock Building and analyzing the {US} airport network based on en-route
  location information.
\newblock {\em Physica A: Statistical Mechanics and its Applications},
  391:4031–4042.

\bibitem[Jia et~al., 2014]{Jia2014}
Jia, T., Qin, K., and Shan, J. (2014).
\newblock An exploratory analysis on the evolution of the {US} airport network.
\newblock {\em Physica A: Statistical Mechanics and its Applications},
  413:266--279.

\bibitem[Lin et~al., 2017]{Lin2017}
Lin, C.-W., Wang, K.-Y., Chang, S.-H., and Lin, J.-A. (2017).
\newblock Investigating the development of brand loyalty in brand communities
  from a positive psychology perspective.
\newblock {\em Journal of Business Research},
  doi:10.1016/j.jbusres.2017.08.033.

\bibitem[Onnela et~al., 2005]{Onnela_2005}
Onnela, J., Saram\"{a}ki, J., Kert{\'{e}}sz, J., and Kaski, K. (2005).
\newblock Intensity and coherence of motifs in weighted complex networks.
\newblock {\em Physical Review E}, 71(6).

\bibitem[Opsahl and Panzarasa, 2009]{Opsahl}
Opsahl, T. and Panzarasa, P. (2009).
\newblock Clustering in weighted networks.
\newblock {\em Social Network}, 231:155--163.

\bibitem[Piccardi and Tajoli, 2012]{Piccardi2012}
Piccardi, C. and Tajoli, L. (2012).
\newblock Existence and significance of communities in the world trade web.
\newblock {\em Physical Review E}, 85(6):066119.

\bibitem[Rotundo and Ausloos, 2010]{Rotundo_2010}
Rotundo, G. and Ausloos, M. (2010).
\newblock Organization of networks with tagged nodes and biased links: a priori
  distinct communities.
\newblock {\em Physica A: Statistical Mechanics and its Applications},
  389(23):5479--5494.

\bibitem[Vitali and Battiston, 2014]{Vitali2014}
Vitali, S. and Battiston, S. (2014).
\newblock The community structure of the global corporate network.
\newblock {\em PloS One}, 9(8):e104655.

\bibitem[Wasserman and Faust, 1994]{WasFaust}
Wasserman, S. and Faust, K. (1994).
\newblock {\em Social Network Analysis: Methods and Applications.}
\newblock Cambridge University Press, New York, NY.

\bibitem[Watts and Strogatz, 1998]{Watts_1998}
Watts, D.~J. and Strogatz, S.~H. (1998).
\newblock Collective dynamics of small-world networks.
\newblock {\em Nature}, 393(6684):440--442.

\bibitem[Wilkinson and Young, 2002]{Wilkinson2002}
Wilkinson, I. and Young, L. (2002).
\newblock On cooperating: firms, relations and networks.
\newblock {\em Journal of Business Research}, 55(2):123--132.

\bibitem[Zaglia, 2013]{Zaglia2013}
Zaglia, M.~E. (2013).
\newblock Brand communities embedded in social networks.
\newblock {\em Journal of Business Research}, 66(2):216--223.

\end{thebibliography}
\newpage

\appendix
\begin{table}[!h]
    \footnotesize
    \begin{tabular}{ c  c  c}
        \hline
        \hline
        Rank & City & Airport Name \\
        \hline
        \hline
        1 & Atlanta & Hartsfield - Jackson Atlanta International \\
        2 & Los Angeles & Los Angeles International \\
        3 & Chicago & Chicago O'Hare International \\
        4 & Fort Worth & Dallas-Fort Worth International\\
        5 & Denver & Denver International \\
        6 & New York & John F Kennedy International\\
        7 & San Francisco & San Francisco International\\
        8 & Las Vegas & McCarran International \\
        9 & Seattle & Seattle-Tacoma International\\
        10 & Charlotte  & Charlotte/Douglas International \\
        11 & Newark & Newark Liberty International \\
        12 & Orlando & Orlando International \\
        13 & Phoenix & Phoenix Sky Harbor International \\
        14 & Miami & Miami International \\
        15 & Houston & George Bush Intercontinental/Houston \\
        16 & Boston & General Edward Lawrence Logan International \\
        17 & Minneapolis & Minneapolis-St. Paul International/Wold-Chamberlain \\
        18 & Detroit & Detroit Metropolitan Wayne County \\
        19 & Fort Lauderdale & Fort Lauderdale/Hollywood International \\
        20 & New York & Laguardia \\
        21 & Philadelphia & Philadelphia International \\
        22 & Glen Burnie & Baltimore/Washington International Thurgood Marshall \\
        23 & Salt Lake City & Salt Lake City International \\
        24 & Arlington & Ronald Reagan Washington National \\
        25 & San Diego & San Diego International \\
        26 & Dulles & Washington Dulles International \\
        27 & Chicago & Chicago Midway International \\
        28 & Honolulu & Daniel K. Inouye International \\
        29 & Tampa & Tampa International \\
        30 & Portland & Portland International \\
        \hline
        \hline
    \end{tabular}
    \caption{Large Primary Hubs according to FAA classification (based on enplanements in 2017)}
    \label{tab:CC3}
\end{table}

\begin{table}[!h]
    \footnotesize
    \begin{tabular}{ c  c  c}
        \hline
        \hline
        Rank & City & Airport Name \\
        \hline
        \hline

        31 & Dallas & Dallas Love Field \\
        32 & St. Louis & St Louis Lambert International \\
        33 & Nashville & Nashville International \\
        34 & Austin & Austin-Bergstrom International\\
        35 & Houston & William P. Hobby \\
        36 & Oakland & Metropolitan Oakland International \\
        37 & San Jose & Norman Y. Mineta San Jose International \\
        38 & Metairie & Louis Armstrong New Orleans International \\
        39 & Raleigh & Raleigh-Durham International \\
        40 & Kansas City & Kansas City International \\
        41 & Sacramento & Sacramento International \\
        42 & Santa Ana & John Wayne Airport-Orange County \\
        43 & Cleveland & Cleveland-Hopkins International \\
        44 & San Antonio & San Antonio International \\
        45 & Fort Myers & Southwest Florida International\\
        46 & Indianapolis & Indianapolis International \\
        47 & Pittsburgh & Pittsburgh International \\
        48 & San Juan & Luis Munoz Marin International \\
        49 & Greater Cincinnati & Cincinnati/Northern Kentucky International \\
        50 & Columbus & John Glenn Columbus International \\
        51 & Kahului & Kahului \\
        52 & Milwaukee & General Mitchell International \\
        53 & Windsor Locks & Bradley International \\
        54 & West Palm Beach & Palm Beach International\\
        55 & Jacksonville & Jacksonville International \\
        56 & Anchorage & Ted Stevens Anchorage International \\
        57 & Albuquerque & Albuquerque International Sunport \\
        58 & Burbank & Bob Hope \\
        59 & Buffalo & Buffalo Niagara International \\
        60 & Ontario & Ontario International \\
        61 & Omaha & Eppley Airfield \\
        \hline
        \hline
    \end{tabular}
    \caption{Medium Primary Hubs according to FAA classification (based on enplanements in 2017)}
    \label{tab:CC3b}
\end{table}

\end{document}